\newtheorem{theorem}{Theorem}
\newtheorem{remark}{Remark}
\newcommand{\yv}{{\bm y}}
\newcommand{\yrv}{\bm Y}
\newcommand{\xv}{{\bm x}}
\newcommand{\xrv}{\bm X}
\newcommand{\zrv}{\bm Z}
\newcommand{\HM}{{\mathsf H}}
\newcommand{\HRM}{\mathbb{H}}
\newcommand{\ERM}{\mathbb{E}}
\newcommand{\IM}{{\mathsf I}}
\newcommand{\nt}{{n_{\rm t}}}
\newcommand{\nr}{{n_{\rm r}}}
\newcommand{\SNR}{{\sf SNR}}
\newcommand{\etal}{\textit{et al.}}
\newcommand{\trans}[1]{#1^{\textnormal{\textsf{\tiny T}}}} %transpose  
\DeclareMathAlphabet{\mathpzc}{OT1}{pzc}{m}{it}
\DeclareSymbolFont{lettersA}{U}{txmia}{m}{it}
 \DeclareMathSymbol{\real}{\mathord}{lettersA}{"92}
 \DeclareMathSymbol{\field}{\mathord}{lettersA}{"83} 
 \DeclareMathSymbol{\integ}{\mathord}{lettersA}{"9A}
\newcommand{\hermi}[1]{#1^{\dagger}} %transpose and conjugate
\begin{document}
\title{Nearest Neighbour Decoding and Pilot-Aided Channel Estimation in Stationary Gaussian Flat-Fading Channels}

\author{\IEEEauthorblockN{A. Taufiq Asyhari, Tobias Koch and Albert Guill{\'e}n i F{\`a}bregas}
\IEEEauthorblockA{University of Cambridge, Cambridge CB2 1PZ, UK\\Email:  taufiq-a@ieee.org, tobi.koch@eng.cam.ac.uk, guillen@ieee.org}}

%\thanks{The authors are with the Department of Engineering, University of Cambridge, Cambridge CB2 1PZ, United Kingdom (email: {\tt  taufiq-a@ieee.org; tobi.koch@eng.cam.ac.uk, guillen@ieee.org}.}% <-this % stops a space

\maketitle

\begin{abstract}
We study the information rates of non-coherent, stationary, Gaussian, multiple-input multiple-output (MIMO) flat-fading channels that are achievable with nearest neighbour decoding and pilot-aided channel estimation. In particular, we analyse the behaviour of these achievable rates in the limit as the signal-to-noise ratio (SNR) tends to infinity. We demonstrate that nearest neighbour decoding and pilot-aided channel estimation achieves the capacity pre-log---which is defined as the limiting ratio of the capacity to the logarithm of SNR as the SNR tends to infinity---of non-coherent multiple-input single-output (MISO) flat-fading channels, and it achieves the best so far known lower bound on the capacity pre-log of non-coherent MIMO flat-fading channels.
\renewcommand{\thefootnote}{}
\footnotetext{The work of A. T. Asyhari has been partly supported by the Yousef Jameel Scholarship at University of Cambridge. The work of T. Koch has received funding from the European's Seventh Framework Programme (FP7/2007--2013) under grant agreement No.\ 252663.}
\setcounter{footnote}{0}
\end{abstract}
% IEEEtran.cls defaults to using nonbold math in the Abstract.
% This preserves the distinction between vectors and scalars. However,
% if the journal you are submitting to favors bold math in the abstract,
% then you can use LaTeX's standard command \boldmath at the very start
% of the abstract to achieve this. Many IEEE journals frown on math
% in the abstract anyway.

% Note that keywords are not normally used for peerreview papers.
%\begin{IEEEkeywords}

%\end{IEEEkeywords}

\IEEEpeerreviewmaketitle

\graphicspath{{Figure/EPS/}{Figure/}}

\section{Introduction}

Coherent multiple-input multiple-output (MIMO) flat-fading channels have a capacity that increases with the signal-to-noise ratio (SNR) as $\min(\nt,\nr)\log \SNR$, where $\nt$ and $\nr$ are the number of transmit and receive antennas, respectively \cite{Bell_foschini_layered_space-time,telatar_multiantenna_Gaussian}. This capacity growth can be achieved using independent and identically distributed (i.i.d.) Gaussian inputs with nearest neighbour decoding. The nearest neighbour decoder is a simple decoder that selects the codeword that is closest to the channel output. In a coherent channel with additive Gaussian noise, this decoder is the maximum-likelihood decoder and is therefore optimal in the sense that it minimises the error probability (see \cite{IEEE:lapidoth:nearestneighbournongaussian} and references therein). However, the coherent channel model assumes that there is a genie that provides the fading coefficients to the decoder, which is difficult to achieve in practice. We exclude the role of the genie by studying a scheme that estimates the fading via pilot symbols. Note that with imperfect fading estimations, the nearest neighbour decoder that treats the fading estimate as if it were perfect is not necessarily optimal. Nevertheless, we show that, in some cases, nearest neighbour decoding and pilot-aided channel estimation is optimal at high SNR in the sense that it achieves the capacity pre-log. The pre-log is defined as the limiting ratio of the achievable rate to $\log \SNR$ as SNR tends to infinity. The capacity pre-log is defined in the same way but with the achievable rate replaced by the capacity.

The capacity of non-coherent fading channels, where the receiver has no knowledge of the fading coefficients, has been studied in a number of works. Building upon \cite{Allerton:Marzetta:BLAST}, Hassibi and Hochwald \cite{IEEE:hassibi:howmuchtraining} studied the capacity of the block-fading channel and used pilot symbols (also known as training symbols) to obtain reasonably accurate fading estimates. Lozano and Jindal \cite{IEEE:Jindal:unifiedtreatmentblock-continuous-fading} provided tools for a unified treatment of pilot-based channel estimation in both block and stationary bandlimited fading channels. In these works, lower bounds on the channel capacity were obtained. Lapidoth \cite{IEEE:lapidoth:ontheasymptotic-capacity} studied a single-input single-output (SISO) fading channel for more general fading processes and showed that, depending on the predictability of the fading process, the capacity growth in SNR can be, \emph{inter alia}, logarithmically or double logarithmically. The extension of \cite{IEEE:lapidoth:ontheasymptotic-capacity} to multiple-input single-output (MISO) fading channels can be found in \cite{IEEE:koch:fadingnumber_degreeoffreedom}. A lower bound on the capacity of MIMO fading channels was derived by Etkin and Tse in \cite{IEEE:etkin:degreeofffreedomMIMO}.  

Lapidoth and Shamai \cite{IEEE:lapidoth:fadingchannels_howperfect} and Weingarten \etal \cite{IEEE:weingarten:gaussiancodes} studied non-coherent fading channels from a mismatched-decoding perspective. In particular, they studied achievable rates with Gaussian inputs and nearest neighbour decoding. In both works, it is assumed that there is a genie that provides imperfect estimates of the fading coefficients. 

In our work, we add the estimation of the fading coefficients to our analysis. In particular, we study a communication system where the transmitter emits at regular intervals pilot symbols, and where the receiver performs \emph{channel estimation} and \emph{data detection}, separately. Based on the channel outputs corresponding to pilot transmissions, the channel estimator produces estimates for the remaining time instants using a linear minimum mean-square error (LMMSE) interpolator. Using these estimates, the data detector employs a nearest neighbour decoder to decide what the transmitted message was. We study the achievable rates of this communication scheme at high SNR. In particular, we study the pre-log for fading processes of bandlimited power spectral densities.

For SISO fading channels, using some simplifying arguments, Lozano \cite{IEEE:lozano:interplay_spectral_doppler} and Jindal and Lozano \cite{IEEE:Jindal:unifiedtreatmentblock-continuous-fading} showed that this scheme achieves the capacity pre-log. In this paper, we prove this result without any simplifying assumptions and extend it to MIMO fading channels. If the inverse of twice the bandwidth of the fading process is an integer, then for MISO channels, the above scheme is optimal in the sense that it achieves the capacity pre-log derived by Koch and Lapidoth \cite{IEEE:koch:fadingnumber_degreeoffreedom}. For MIMO channels, the above scheme achieves the best so far known lower bound on the capacity pre-log obtained in \cite{IEEE:etkin:degreeofffreedomMIMO}.

The paper is organised as follows. Section~\ref{sec:model} describes the channel model and introduces the encoding and decoding scheme. Section~\ref{sec:prelog} defines the pre-log and presents the main result. And Section~\ref{sec:proofs} outlines the proof of this result.

 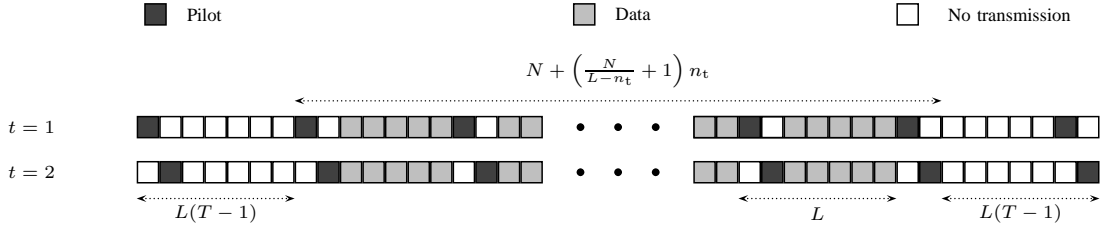
\begin{figure*}[t]
 \begin{center}
  %%%%% Pilot + Data structure
\begin{pspicture}(-5,-0.65)(12,1.8)

\psframe[linewidth=0.02,fillstyle=solid,fillcolor=darkgray](-2,1.5)(-1.7,1.8)
\rput(-1.2,1.65){\scriptsize Pilot}

\psframe[linewidth=0.02,fillstyle=solid,fillcolor=lightgray](3.7,1.5)(4,1.8)
\rput(4.5,1.65){\scriptsize Data}

\psframe[linewidth=0.02](8.0,1.5)(8.3,1.8)
\rput(9.5,1.65){\scriptsize No transmission}

\rput(4.3,0.9){\scriptsize $N + \left(\frac{N}{L - n_{\rm t}} + 1 \right)n_{\rm t}$}
\psline[linewidth=0.02,linestyle=dotted,dotsep=1pt]{<->}(0,0.5)(8.6,0.5)

\rput(6.95,-1){\scriptsize $L$}
\psline[linewidth=0.02,linestyle=dotted,dotsep=1pt]{<->}(5.9,-0.8)(8,-0.8)

\rput(9.65,-1){\scriptsize $L(T-1)$}
\psline[linewidth=0.02,linestyle=dotted,dotsep=1pt]{<->}(8.6,-0.8)(10.7,-0.8)

\rput(-1.05,-1){\scriptsize $L(T-1)$}
\psline[linewidth=0.02,linestyle=dotted,dotsep=1pt]{<->}(-2.1,-0.8)(0,-0.8)

{
\rput(-3.5,0.15){\scriptsize $t=1$}

\psframe[linewidth=0.02,fillstyle=solid,fillcolor=darkgray](-2.1,0)(-1.8,0.3)
\psframe[linewidth=0.02](-1.8,0)(-1.5,0.3)
\psframe[linewidth=0.02](-1.5,0)(-1.2,0.3)
\psframe[linewidth=0.02](-1.2,0)(-0.9,0.3)
\psframe[linewidth=0.02](-0.9,0)(-0.6,0.3)
\psframe[linewidth=0.02](-0.6,0)(-0.3,0.3)
\psframe[linewidth=0.02](-0.3,0)(0,0.3)

\psframe[linewidth=0.02,fillstyle=solid,fillcolor=darkgray](0,0)(0.3,0.3)
\psframe[linewidth=0.02](0.3,0)(0.6,0.3) 
\psframe[linewidth=0.02,fillstyle=solid,fillcolor=lightgray](0.6,0)(0.9,0.3)
\psframe[linewidth=0.02,fillstyle=solid,fillcolor=lightgray](0.9,0)(1.2,0.3)
\psframe[linewidth=0.02,fillstyle=solid,fillcolor=lightgray](1.2,0)(1.5,0.3)
\psframe[linewidth=0.02,fillstyle=solid,fillcolor=lightgray](1.5,0)(1.8,0.3)
\psframe[linewidth=0.02,fillstyle=solid,fillcolor=lightgray](1.8,0)(2.1,0.3)

\psframe[linewidth=0.02,fillstyle=solid,fillcolor=darkgray](2.1,0)(2.4,0.3)
\psframe[linewidth=0.02](2.4,0)(2.7,0.3) 
\psframe[linewidth=0.02,fillstyle=solid,fillcolor=lightgray](2.7,0)(3,0.3)
\psframe[linewidth=0.02,fillstyle=solid,fillcolor=lightgray](3,0)(3.3,0.3)

\pscircle[fillstyle=solid,fillcolor=black,linewidth=0.0875,linecolor=white](3.8,0.15){0.14}
\pscircle[fillstyle=solid,fillcolor=black,linewidth=0.0875,linecolor=white](4.3,0.15){0.14}
\pscircle[fillstyle=solid,fillcolor=black,linewidth=0.0875,linecolor=white](4.8,0.15){0.14}

\psframe[linewidth=0.02](5.3,0)(8.6,0.3)
\psframe[linewidth=0.02,fillstyle=solid,fillcolor=lightgray](5.3,0)(5.6,0.3)
\psframe[linewidth=0.02,fillstyle=solid,fillcolor=lightgray](5.6,0)(5.9,0.3)
\psframe[linewidth=0.02,fillstyle=solid,fillcolor=darkgray](5.9,0)(6.2,0.3)
\psframe[linewidth=0.02](6.2,0)(6.5,0.3)
\psframe[linewidth=0.02,fillstyle=solid,fillcolor=lightgray](6.5,0)(6.8,0.3)
\psframe[linewidth=0.02,fillstyle=solid,fillcolor=lightgray](6.8,0)(7.1,0.3)
\psframe[linewidth=0.02,fillstyle=solid,fillcolor=lightgray](7.1,0)(7.4,0.3)
\psframe[linewidth=0.02,fillstyle=solid,fillcolor=lightgray](7.4,0)(7.7,0.3)
\psframe[linewidth=0.02,fillstyle=solid,fillcolor=lightgray](7.7,0)(8,0.3)

\psframe[linewidth=0.02,fillstyle=solid,fillcolor=darkgray](8,0)(8.3,0.3)
\psframe[linewidth=0.02](8.3,0)(8.6,0.3)
\psframe[linewidth=0.02](8.6,0)(8.9,0.3)
\psframe[linewidth=0.02](8.9,0)(9.2,0.3)
\psframe[linewidth=0.02](9.2,0)(9.5,0.3)
\psframe[linewidth=0.02](9.5,0)(9.8,0.3)
\psframe[linewidth=0.02](9.8,0)(10.1,0.3)

\psframe[linewidth=0.02,fillstyle=solid,fillcolor=darkgray](10.1,0)(10.4,0.3)
\psframe[linewidth=0.02](10.4,0)(10.7,0.3)

}

{
\rput(-3.5,-0.45){\scriptsize $t=2$}

\psframe[linewidth=0.02](-2.1,-0.6)(-1.8,-0.3)
\psframe[linewidth=0.02,fillstyle=solid,fillcolor=darkgray](-1.5,-0.6)(-1.8,-0.3)
\psframe[linewidth=0.02](-1.5,-0.6)(-1.2,-0.3) 
\psframe[linewidth=0.02](-1.2,-0.6)(-0.9,-0.3)
\psframe[linewidth=0.02](-0.9,-0.6)(-0.6,-0.3) 
\psframe[linewidth=0.02](-0.6,-0.6)(-0.3,-0.3)
\psframe[linewidth=0.02](-0.3,-0.6)(0,-0.3) 

\psframe[linewidth=0.02](0,-0.6)(0.3,-0.3)
\psframe[linewidth=0.02,fillstyle=solid,fillcolor=darkgray](0.3,-0.6)(0.6,-0.3)
\psframe[linewidth=0.02,fillstyle=solid,fillcolor=lightgray](0.6,-0.6)(0.9,-0.3)
\psframe[linewidth=0.02,fillstyle=solid,fillcolor=lightgray](0.9,-0.6)(1.2,-0.3)
\psframe[linewidth=0.02,fillstyle=solid,fillcolor=lightgray](1.2,-0.6)(1.5,-0.3)
\psframe[linewidth=0.02,fillstyle=solid,fillcolor=lightgray](1.5,-0.6)(1.8,-0.3)
\psframe[linewidth=0.02,fillstyle=solid,fillcolor=lightgray](1.8,-0.6)(2.1,-0.3)

\psframe[linewidth=0.02](2.1,-0.6)(2.4,-0.3)
\psframe[linewidth=0.02,fillstyle=solid,fillcolor=darkgray](2.4,-0.6)(2.7,-0.3)
\psframe[linewidth=0.02,fillstyle=solid,fillcolor=lightgray](2.7,-0.6)(3,-0.3)
\psframe[linewidth=0.02,fillstyle=solid,fillcolor=lightgray](3,-0.6)(3.3,-0.3)

\pscircle[fillstyle=solid,fillcolor=black,linewidth=0.0875,linecolor=white](3.8,-0.45){0.14}
\pscircle[fillstyle=solid,fillcolor=black,linewidth=0.0875,linecolor=white](4.3,-0.45){0.14}
\pscircle[fillstyle=solid,fillcolor=black,linewidth=0.0875,linecolor=white](4.8,-0.45){0.14}

\psframe[linewidth=0.02,fillstyle=solid,fillcolor=lightgray](5.3,-0.6)(5.6,-0.3)
\psframe[linewidth=0.02,fillstyle=solid,fillcolor=lightgray](5.6,-0.6)(5.9,-0.3)
\psframe[linewidth=0.02](5.9,-0.6)(6.2,-0.3)
\psframe[linewidth=0.02,fillstyle=solid,fillcolor=darkgray](6.2,-0.6)(6.5,-0.3)

\psframe[linewidth=0.02,fillstyle=solid,fillcolor=lightgray](6.5,-0.6)(6.8,-0.3)
\psframe[linewidth=0.02,fillstyle=solid,fillcolor=lightgray](6.8,-0.6)(7.1,-0.3)
\psframe[linewidth=0.02,fillstyle=solid,fillcolor=lightgray](7.1,-0.6)(7.4,-0.3)
\psframe[linewidth=0.02,fillstyle=solid,fillcolor=lightgray](7.4,-0.6)(7.7,-0.3)
\psframe[linewidth=0.02,fillstyle=solid,fillcolor=lightgray](7.7,-0.6)(8,-0.3)

\psframe[linewidth=0.02](8,-0.6)(8.3,-0.3)
\psframe[linewidth=0.02,fillstyle=solid,fillcolor=darkgray](8.3,-0.6)(8.6,-0.3)
\psframe[linewidth=0.02](8.6,-0.6)(8.9,-0.3)
\psframe[linewidth=0.02](8.9,-0.6)(9.2,-0.3)
\psframe[linewidth=0.02](9.2,-0.6)(9.5,-0.3)
\psframe[linewidth=0.02](9.5,-0.6)(9.8,-0.3)
\psframe[linewidth=0.02](9.8,-0.6)(10.1,-0.3)

\psframe[linewidth=0.02](10.1,-0.6)(10.4,-0.3)
\psframe[linewidth=0.02,fillstyle=solid,fillcolor=darkgray](10.7,-0.6)(10.4,-0.3)

}

\end{pspicture}
\end{center}
 \caption{Structure of pilot and data transmission for $n_{\rm t} = 2$, $L = 7$ and $T=2$.}
 \vspace*{-2mm}
 \label{fig:pilot_data_illustration}
 \end{figure*}

\section{System Model}
\label{sec:model}
%%%%%%%%%%%%%%%%%%%%%%%%%%%%%%%%%%%%%
We consider a discrete-time $\nr \times \nt$ MIMO flat-fading channel, whose channel output at time instant $k \in \integ$ (where $\integ$ denotes the set of integers) is the complex-valued $\nr$-dimensional random vector given by 
\begin{equation}
 \yrv_k = \sqrt{\frac{\sf SNR}{\nt}} \HRM_k \xv_k + \zrv_k.
\end{equation}
Here  $\xv_k \in \field^{\nt}$ denotes the time-$k$ channel input vector (with $\field$ denoting the set of complex numbers); $\HRM_k \in \field^{\nr \times \nt}$ denotes the fading matrix at time $k$; and $\zrv_k \in \field^{\nr}$ denotes the additive noise vector at time $k$.

The noise process $\{\zrv_k, k \in \integ \}$ is a sequence of independent and identically distributed (i.i.d.) complex Gaussian random vectors of zero mean and covariance matrix ${\sf I}_{\nr}$, where ${\sf I}_{\nr}$ is the $\nr \times \nr$ identity matrix. $\sf SNR$ denotes the average SNR for each received antenna.

The fading process $\{ \HRM_k, k \in \integ \}$ is stationary, ergodic and Gaussian. We assume that the $\nr\cdot\nt$ processes $\{H_k(r,t), k \in \integ \}$, $r=1,\ldots,\nr$, $t=1,\ldots,\nt$ are independent and have the same law, with each process having zero-mean, unit-variance and power spectral density $f_H (\lambda)$, $-\frac{1}{2} \leq \lambda \leq \frac{1}{2}$. Thus, $f_H(\cdot)$ is a non-negative function satisfying
 \begin{equation}
 \mathsf{E} \left[ H_{k+m} (r,t) \hermi{H}_{k} (r,t) \right] = \int^{1/2}_{-1/2} e^{i2\pi m \lambda} f_H (\lambda) d \lambda
\end{equation}
where $\hermi{(\cdot)}$ denotes complex conjugation. We further assume that the power spectral density $f_H(\cdot)$ has bandwidth $\lambda_D<1/2$, i.e., $f_H (\lambda) = 0$ for $|\lambda| > \lambda_D$ and $f_H(\lambda)>0$ otherwise.

We finally assume that the fading process $\{\HRM_k, k\in\integ\}$ and the noise process $\{\zrv_k, k \in \integ \}$ are independent and that their joint law does not depend on $\{\xv_k, k\in\integ\}$.

The transmission involves both codewords and pilots. The former convey the message to be transmitted, and the latter are used to facilitate the estimation of the fading coefficients at the receiver. The codeword is selected from the codebook $\mathcal{C}$, which is drawn i.i.d. from a zero-mean unit-variance complex Gaussian distribution. The codeword is assumed to satisfy the average-power constraint 
\begin{equation}
\frac{1}{N} \sum^{N}_{n=1}\mathsf{E} \left[\| \bar \xrv_n (m) \|^2 \right] \leq \nt,~~ m \in \mathcal{M}  \label{eq:pw_constraint}
\end{equation} 
where $\mathcal{M} = \left \{1,\dotsc, e^{NR} \right \}$ is the set of possible messages, and $N$ and $R$ denote the codeword length and the coding rate. 

To estimate the fading matrix, we transmit orthogonal pilot vectors. The pilot vector ${\bm p}_t$ used to estimate the fading coefficients corresponding to the $t$-th transmit antenna is given by $p_t(t)=1$ and $p_t(t')=0$ for $t'\neq t$. For example, the first pilot vector is ${\bm p}_1=\trans{\left(1,0,\cdots,0 \right)}$, where $\trans{(\cdot)}$ denotes the transpose. To estimate the whole fading matrix, we thus need to send the $\nt$ pilot vectors ${\bm p}_1,\ldots,{\bm p}_{\nt}$.

The transmission scheme is as follows. Every $L$ time instants (for some $L\in\integ$), we transmit the $\nt$ pilot vectors ${\bm p}_1,\ldots,{\bm p}_{\nt}$. Each codeword is then split up into blocks of $L-\nt$ data vectors, which will be transmitted after the $\nt$ pilot vectors. The process of transmitting $L-\nt$ data vectors and $\nt$ pilot vectors continues until all $N$ data vectors are completed. Herein we assume that $N$ is an integer multiple of $L-\nt$.\footnote{If $N$ is not an integer multiple of $L-\nt$, then the last $L-\nt$ instants are not fully used by data vectors and contain therefore time instants where we do not transmit anything. The thereby incurred loss in information rate vanishes as $N$ tends to infinity.} Prior to transmitting the first data block, and after transmitting the last data block, we introduce a guard period of $L (T -1)$ time instants (for some $T\in\integ$), where we transmit every $L$ time instants the $\nt$ pilot vectors ${\bm p}_1,\ldots,{\bm p}_{\nt}$, but we do not transmit data vectors in between. The guard period ensures that, at every time instant, we can employ a channel estimator that bases its estimation on the channel outputs corresponding to the $T$ past and the $T$ future pilot transmissions. This facilitates the analysis and does not incur a loss in terms of achievable rate. The above transmission scheme is illustrated in Figure~\ref{fig:pilot_data_illustration}. The channel estimator is described below. 

Note that the total block-length of the above transmission scheme (comprising data vectors, pilot vectors and guard period) is given by
\begin{equation}
 N' = N_{\rm p} + N + N_{\rm un} \label{eq:total_length}
\end{equation}
where $N_{\rm p}$ denotes the number of channel uses for pilot vectors, and where $N_{\rm un}$ denotes the number of channel uses during the silent guard period, i.e.,
\begin{eqnarray}
 N_{\rm p} &=& \left(\frac{N}{L - \nt} + 1  + 2 (T-1) \right) \nt \\
 N_{\rm un} &=& 2(L-\nt)(T-1).
\end{eqnarray}

We now turn to the decoder. Let $\mathcal{D}$ denote the set of time indices where data vectors of a codeword are transmitted, and let $\mathcal{P}$ denote the set of time indices where pilots are transmitted. The decoder consists of two parts: a \emph{channel estimator} and a \emph{data detector}. The channel estimator considers the channel output vectors ${\yrv}_k$, $k\in\mathcal{P}$ corresponding to the past and future $T$ pilot transmissions and estimates $H_{k}(r,t)$ using a linear interpolator, i.e., the estimate $\hat{H}_k^{(T)}(r,t)$ of the fading coefficient $H_k(r,t)$ is given by
\begin{equation}
\label{eq:LMMSEestimation}
 \hat H_k^{(T)}(r,t) = \sum^{k + T L}_{\substack{ k' = k - TL:\\ k' \in \mathcal{P}}} a_{k'} (r,t) Y_{k'}(r)
\end{equation}
where the coefficients $a_{k'} (r,t)$ are chosen in order to minimize the mean-squared error.

Note that, since the pilot vectors transmit only from one antenna, the fading coefficients corresponding to all transmit and receive antennas $(r,t)$ can be observed. Further note that, since the fading processes $\{H_k(r,t), k \in \integ \}$, $r=1,\ldots,\nr$, $t=1,\ldots,\nt$ are independent, estimating $H_{k}(r,t)$ only based on $\{Y_k(r),k\in\integ\}$ rather than on $\{\yrv_k,k\in\integ\}$ incurs no loss in optimality.

Since the time-lags between $\HRM_k$, $k\in\mathcal{D}$ and the observations $\yrv_{k'}$, $k'\in\mathcal{P}$ depend on $k$, it follows that the interpolation error
\begin{equation}
E_k^{(T)}(r,t) = H_k(r,t) - \hat H_k^{(T)}(r,t) \label{eq:Ert}
\end{equation}
is not stationary but cyclo-stationary with period $L$. Nevertheless, it can be shown that, irrespective of $(r,t)$, the variance of the interpolation error
\begin{equation}
\sigma^2_{e,T}(\ell, r,t) = \mathsf{E}\left[\left|H_k(r,t)-\hat{H}^{(T)}_k(r,t)\right|^2\right]
\end{equation}
tends to the following expressions as $T$ tends to infinity \cite{IEEE:ohno:averageratePSAM}
\begin{eqnarray}
 \sigma^2_{e} (\ell) & \triangleq & \lim_{T \rightarrow \infty} \sigma^2_{e,T}(\ell,r,t) \\ 
 &=& 1 -  \int^{1/2}_{-1/2} \frac{\SNR |f_{H_L,\ell}(\lambda) |^2}{\SNR f_{H_L,0} (\lambda) + 1} d \lambda
\end{eqnarray}
where $\ell=k\mod L$ denotes the remainder of $k/L$. Here $f_{H_L,\ell}(\cdot)$ is given by
\begin{equation}
 f_{H_{L},\ell} (\lambda) = \frac{1}{L} \sum^{L-1}_{j=0} \bar{f}_H \left( \frac{\lambda - j}{L} \right)e^{i2\pi \ell \frac{\lambda - j}{L} }
\end{equation}
 and $\bar{f}_H(\cdot)$ is the periodic function of period $[-1/2,1/2)$ that coincides with $f_H(\lambda)$ for $-1/2\leq\lambda\leq 1/2$. If
\begin{equation}
 L \leq \frac{1}{2 \lambda_D}  \label{eq:nyquist}
\end{equation}
then $|f_{H_L,\ell}(\cdot)|$ becomes
\begin{equation}
|f_{H_L,\ell}(\lambda)| = f_{H_L,0}(\lambda)=\frac{1}{L} f_H\left(\frac{\lambda}{L}\right), ~~ -\frac{1}{2}\leq\lambda\leq\frac{1}{2}.
\end{equation}
In this case the interpolation error is given by
\begin{equation}
 \sigma^2_{e} (\ell) = 1 -  \int^{1/2}_{-1/2} \frac{\SNR \big(f_{H}(\lambda) \big)^2}{\SNR f_{H} (\lambda) + L} d \lambda,\,\,\ell =0,\dotsc,L-1 \label{eq:fadingestimate_error}
\end{equation}
which vanishes as the $\SNR$ tends to infinity. Recall that $\lambda_D$ denotes the bandwidth of $f_H(\cdot)$. Thus, \eqref{eq:nyquist} implies that no aliasing occurs as we undersample the fading process $L$ times.

The channel estimator feeds the sequence of fading estimates $\{\hat\HRM_k^{(T)},k\in\mathcal{D}\}$ (which is composed of the matrix entries $\{\hat{H}_k^{(T)}(r,t),k\in\mathcal{D}\}$) to the data detector. We shall denote its realisation by $\{\hat \HM^{(T)}_k, k \in \mathcal{D} \}$. Based on the channel outputs $\{\yv_k,k\in \mathcal{D} \}$ and fading estimates $\{\hat\HM_k^{(T)},k\in\mathcal{D}\}$, the data detector uses a nearest neighbour decoder to guess which message was transmitted. Thus, the decoder decides on the message $\hat m$ that satisfies
\begin{equation}
 \hat m = \arg \min_{m \in\mathcal{M}}  D(m) 
\end{equation}
where
\begin{equation}
 D(m) \triangleq \sum_{k \in \mathcal{D}} \left
 \|\yv_k - \sqrt{\SNR / \nt} ~ \hat \HM^{(T)}_k \xv_k(m) \right\|^2
\end{equation}
and where $\|\cdot\|$ denotes the Euclidean norm.

\section{The Pre-Log}
\label{sec:prelog}
We say that a rate is achievable if the error probability tends to zero as the codeword length tends to infinity. In this work, we study the maximum rate $R^*(\SNR)$ that is achievable with nearest neighbour decoding and pilot-aided channel estimation. We focus on the achievable rates at high $\SNR$. In particular, we are interested in the maximum achievable pre-log, defined as
\begin{equation}
\Pi_{R^*} \triangleq \limsup_{\SNR \rightarrow \infty} ~ \frac{R^*(\SNR)}{\log \SNR} . \label{def:gmi-pre-log-gaussian}
\end{equation}

The capacity pre-log---which is given by \eqref{def:gmi-pre-log-gaussian} but with $R^*(\SNR)$ replaced by the capacity $C(\SNR)$---of SISO fading channels was computed by Lapidoth \cite{IEEE:lapidoth:ontheasymptotic-capacity} as
\begin{equation}
\Pi_{C} = \mu \big(\{\lambda\colon f_H (\lambda) = 0\} \big) \label{eq:eta}
\end{equation}
where $\mu (\cdot)$ denotes the Lebesgue measure on the interval $[-1/2,1/2]$. Koch and Lapidoth \cite{IEEE:koch:fadingnumber_degreeoffreedom} extended this result to MISO fading channels and showed that if the fading processes $\{H_k(t),k\in\integ\}$, $t=1,\ldots,\nt$ are independent and have the same law, then the capacity pre-log of MISO fading channels is equal to the capacity pre-log of the SISO fading channel with fading process $\{H_k(1),k\in\integ\}$. Using \eqref{eq:eta}, the capacity pre-log of MISO fading channels with power spectral density of bandwidth $\lambda_D$ can be evaluated as
\begin{equation}
\label{eq:MISO-capacity-prelog-BL}
\Pi_C = 1-2\lambda_D.
\end{equation}
Since $R^*(\SNR)\leq C(\SNR)$, it follows that $\Pi_{R^*}\leq\Pi_C$.

To the best of our knowledge, the capacity pre-log of MIMO fading channels is unknown. For independent fading processes $\{H_k(r,t),k\in\integ\}$, $t=1,\dots,\nt$, $r=1,\ldots,\nr$ that have the same law, the best so far known lower bound on the MIMO pre-log is due to Etkin and Tse \cite{IEEE:etkin:degreeofffreedomMIMO}
\begin{equation}
\Pi_{C} \geq \min (\nt, \nr) \Big(1 - \min(\nt,\nr) \mu\big(\{ \lambda\colon f_H(\lambda)> 0\}\big) \Big).\label{eq:etaMIMO}
\end{equation}
For power spectral densities that are bandlimited to $\lambda_D$, this becomes
\begin{equation}
\label{eq:MIMO-capacity-prelog-BL}
\Pi_C \geq \min (\nt, \nr) \big(1 - \min(\nt,\nr)\, 2\lambda_D\big).
\end{equation}
Observe that \eqref{eq:MIMO-capacity-prelog-BL} specialises to \eqref{eq:MISO-capacity-prelog-BL} for $\nr=1$. It should be noted that the capacity pre-log for MISO and SISO fading channels was derived under a peak-power constraint on the channel inputs, whereas the lower bound on the capacity pre-log for MIMO fading channels was derived under an average-power constraint. Clearly, the capacity pre-log corresponding to a peak-power constraint can never be larger than the capacity pre-log corresponding to an average-power constraint. It is believed that the two pre-logs are in fact identical (see the conclusion in \cite{IEEE:lapidoth:ontheasymptotic-capacity}).

In this paper, we show that a communication scheme that employs nearest neighbour decoding and pilot-aided channel estimation achieves the following pre-log.

\begin{theorem}
\label{th:pre-log-MIMO}
 Consider the above Gaussian MIMO flat-fading channel with $\nt$ transmit antennas and $\nr$ receive antennas. Then, the transmission and decoding scheme described in Section \ref{sec:model} achieves
\begin{equation}
 \Pi_{R^*} \geq \min(\nt, \nr) \left(1 - \frac{\min(\nt,\nr)}{L^*}\right) \label{eq:pre-log_L}
\end{equation}
where $L^*$ is the largest integer satisfying $L^*\leq \frac{1}{2\lambda_D}$.
\end{theorem}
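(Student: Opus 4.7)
I would lower-bound the rate achievable by the scheme of Section~\ref{sec:model} via the generalised mutual information (GMI) framework for mismatched decoding developed by Lapidoth--Shamai~\cite{IEEE:lapidoth:fadingchannels_howperfect} and Weingarten \emph{et al.}~\cite{IEEE:weingarten:gaussiancodes}, and then optimise over the pilot spacing $L$ and the number of \emph{active} transmit antennas. Setting $\tilde{n}_{\rm t}\triangleq\min(\nt,\nr)$, I would specialise the scheme so that only the first $\tilde{n}_{\rm t}$ transmit antennas carry signal (zeroing the remaining $\nt-\tilde{n}_{\rm t}$ components of every codeword and every pilot vector), which satisfies the power constraint and cuts the pilot overhead to $\tilde{n}_{\rm t}/L$; and I would choose $L=L^*$, so that \eqref{eq:nyquist} holds and \eqref{eq:fadingestimate_error} gives an LMMSE interpolation-error variance $\sigma^2_e(\ell)$ that is independent of the cyclo-stationary position $\ell$ and obeys $\SNR\cdot\sigma^2_e\to 2\lambda_D L^*$ as $\SNR\to\infty$.

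The first step is to produce, for each data time instant $k\in\mathcal{D}$, a GMI-type lower bound on the per-symbol achievable rate. Since fading and noise are jointly Gaussian, the LMMSE orthogonality principle implies that $\hat\HRM_k^{(T)}$ is independent of $\EM_k^{(T)}\triangleq\HRM_k-\hat\HRM_k^{(T)}$. Decomposing
\begin{equation}
\yrv_k=\sqrt{\SNR/\tilde{n}_{\rm t}}\,\hat\HRM_k^{(T)}\xv_k+\bigl(\sqrt{\SNR/\tilde{n}_{\rm t}}\,\EM_k^{(T)}\xv_k+\zrv_k\bigr)
\end{equation}
and treating the parenthesised term as Gaussian noise with covariance $(1+\SNR\,\sigma^2_{e,T}(\ell))I_{\nr}$ (its actual second moment after averaging over $\xv_k$), a standard mismatched-decoding computation yields the per-symbol GMI lower bound
\begin{equation}
I^{\rm gmi}_\ell(\SNR)\geq\mathsf{E}\!\left[\log\det\!\left(I_{\nr}+\frac{(\SNR/\tilde{n}_{\rm t})\,\hat\HRM_k^{(T)}\hermi{\hat\HRM_k^{(T)}}}{1+\SNR\,\sigma^2_{e,T}(\ell)}\right)\right].
\end{equation}
Passing to $T\to\infty$, averaging across the $L^*-\tilde{n}_{\rm t}$ data positions within a block, and multiplying by the pilot-overhead factor $(L^*-\tilde{n}_{\rm t})/L^*$ then lower-bounds $R^*(\SNR)$.

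Finally, in the high-SNR analysis, $\SNR\cdot\sigma^2_e$ stays bounded while $\hat\HRM_k^{(T)}\to\HRM_k$, a Gaussian matrix of dimension $\nr\times\tilde{n}_{\rm t}$ with rank $\tilde{n}_{\rm t}$ almost surely; hence the log-det grows as $\tilde{n}_{\rm t}\log\SNR+O(1)$, and the overall pre-log evaluates to $\tilde{n}_{\rm t}(L^*-\tilde{n}_{\rm t})/L^*=\min(\nt,\nr)\bigl(1-\min(\nt,\nr)/L^*\bigr)$, as claimed. The main technical hurdle is the rigorous justification of the GMI bound in the presence of cyclo-stationary interpolation errors together with the exchange of the expectation and the $T\to\infty$ and $\SNR\to\infty$ limits; the latter requires a uniform integrability argument for $\log\det\bigl(\hat\HRM_k^{(T)}\hermi{\hat\HRM_k^{(T)}}\bigr)$, which is feasible because the Gaussianity and independence of the scalar fading processes $\{H_k(r,t)\}$ yield a Wishart-type distribution for $\hat\HRM_k^{(T)}\hermi{\hat\HRM_k^{(T)}}$ whose negative moments can be controlled.
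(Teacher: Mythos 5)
Your proposal is correct and follows essentially the same route as the paper: reduce to $\min(\nt,\nr)$ active antennas, take $L=L^*$ so that \eqref{eq:nyquist} and \eqref{eq:fadingestimate_error} apply, lower-bound the rate by the GMI with the residual estimation error absorbed into an effective noise of variance $1+\SNR\,\sigma^2_{e,T}$ (the paper obtains exactly this form by an explicit choice of $\theta$ in \eqref{eq:GMI_gartner_ellis}), and then use the boundedness of $\SNR\,\sigma^2_{e^*}$ together with the finiteness of $\mathsf{E}[\log\det\bar\HRM\hermi{\bar\HRM}]$ to extract the pre-log. The only minor difference is in the $T\to\infty$ step, where the paper sidesteps your uniform-integrability concern by invoking Portmanteau's lemma for the lower-semicontinuous, bounded-below function $\mathsf{A}\mapsto\log\det(\IM+\mathsf{A})$, which suffices since only a lower bound is needed.
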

\begin{proof}
Due to page limitations, only an outline of the proof is given in Section~\ref{sec:proofs}.
\end{proof}

\begin{remark}
We derive Theorem \ref{th:pre-log-MIMO} for i.i.d. Gaussian inputs satisfying the average-power constraint \eqref{eq:pw_constraint}. Nevertheless, using truncated Gaussian inputs, it can be shown that Theorem \ref{th:pre-log-MIMO} also holds when the channel inputs have to satisfy a peak-power constraint, i.e., with probability one $|\bar \xrv_k| \leq 1$.
\end{remark}

If $1/(2\lambda_D)$ is an integer, then \eqref{eq:pre-log_L} becomes
\begin{equation}
\Pi_{R^*} \geq \min(\nt, \nr) \big(1 - \min(\nt,\nr)\,2\lambda_D\big).\label{eq:pre-log_lambda}
\end{equation}
Thus, in this case nearest neighbour decoding together with pilot-aided channel estimation achieves the capacity pre-log of MISO fading channels \eqref{eq:MISO-capacity-prelog-BL}, as well as the lower bound on the capacity pre-log of MIMO fading channels \eqref{eq:MIMO-capacity-prelog-BL}.

Comparing \eqref{eq:pre-log_L} and \eqref{eq:MIMO-capacity-prelog-BL} with the capacity pre-log $\min(\nt,\nr)$ for coherent fading channels \cite{Bell_foschini_layered_space-time,telatar_multiantenna_Gaussian}, we observe that, for a fading process of bandwidth $\lambda_D$, the penalty for not knowing the fading coefficients is roughly $\big(\min(\nt,\nr)\big)^22\lambda_D$. Consequently, the lower bound \eqref{eq:pre-log_L} does not grow linearly with $\min(\nt,\nr)$, but it is a quadratic function of $\min(\nt,\nr)$ that achieves its maximum at
\begin{equation}
\min(\nt,\nr) = \frac{L^*}{2}.
\end{equation}
This gives rise to the lower bound
\begin{equation}
\Pi_{R^*} \geq \frac{L^*}{4}
\end{equation}
which cannot be larger than $1/(8\lambda_D)$. The same holds for the lower bound \eqref{eq:etaMIMO}. 

\section{Proof Outline}
\label{sec:proofs}
We first note that it suffices to consider the case where $\nt=\nr$. If $\nt>\nr$, then we employ only $\nr$ transmit antennas, and if $\nr>\nt$, then we ignore $\nr-\nt$ antennas at the receiver. This yields in both cases a lower bound on the achievable rate.

To prove Theorem~\ref{th:pre-log-MIMO}, we analyse the generalized mutual information (GMI) for the above channel and communication scheme. The GMI, denoted by $I^{\rm gmi}(\SNR)$, specifies the highest information rate for which the average probability of error, averaged over the ensemble of i.i.d. Gaussian codebooks, tends to zero as the codeword length $N$ tends to infinity (see \cite{IEEE:lapidoth:nearestneighbournongaussian,IEEE:lapidoth:fadingchannels_howperfect,IEEE:weingarten:gaussiancodes} and references therein). 

Let $\mathbb{E}_k^{(T)}$ denote the estimation error in estimating $\mathbb{H}_k$, i.e., $\mathbb{E}_k^{(T)}$ is composed of the matrix entries $E_k^{(T)}(r,t)$ \eqref{eq:Ert}. Then, for the above channel model, the GMI can be evaluated as
\begin{equation}
 I^{\rm gmi}(\SNR) = \sup_{\theta \leq 0} \bigg ( \theta B (\SNR) - \kappa (\theta, \SNR) \bigg ) \label{eq:GMI_gartner_ellis}
\end{equation}
where
\begin{IEEEeqnarray}{lCl}
B(\SNR) & = & \frac{1}{L} \sum^{L-\nt}_{\ell = 1 } \mathsf E \left[ \nr + \sqrt{\SNR / \nt}\Bigl\| \mathbb E^{(T)}_{\ell} \Bigr\|^2_F \right] \label{eq:const_B}
\end{IEEEeqnarray}
(with $\|\cdot\|_F$ denoting the Frobenius norm);  and where $\kappa (\theta,\SNR)$ is the conditional log moment-generating function of the metric $D(m')$ associated with an incorrect message---conditioned on the channel outputs and on the fading estimates---which is given by
\begin{IEEEeqnarray}{lCl}
 \IEEEeqnarraymulticol{3}{l}{\kappa (\theta,\SNR)}\nonumber\\
 \quad & = & \frac{1}{L}  \sum^{L-\nt}_{\ell = 1} \mathsf E \left[ \theta \hermi{\yrv}_{\ell} \left( \IM_{\nr} - \theta \frac{\SNR}{\nt} \hat \HRM^{(T)}_{\ell} \hat \HRM^{\dagger(T)}_{\ell} \right)^{-1} \yrv_{ \ell} \right]  \nonumber \\
& & {}  - \frac{1}{L}  \sum^{L-\nt}_{\ell = 1} \mathsf E \left[ \log {\rm det} \left(\IM_{\nr} - \theta \frac{\SNR}{\nt} \hat \HRM^{(T)}_{\ell} \hat \HRM^{\dagger (T)}_{\ell} \right) \right].\IEEEeqnarraynumspace \label{eq:sec_term_kappa_theta}
\end{IEEEeqnarray}

Following \cite{IEEE_IT_Asyhari_Nearest_neighbour} it can be shown that for $\theta \leq 0$
\begin{equation}
 \mathsf{E} \left[ \theta \hermi{\yrv}_{ \ell} \left( \IM_{\nr} - \theta \frac{\SNR}{\nt} \hat \HRM^{(T)}_{\ell} \hat \HRM^{(T)}_{\ell} \right)^{-1} \yrv_{\ell} \right] \leq 0. \label{eq:taufiq-technique}
\end{equation}
As observed in \cite{IEEE_IT_Asyhari_Nearest_neighbour}, the choice $\theta =-\frac{1}{\nr + (\SNR / \nt) \nt \nr \sigma^2_{e^*,T}}$
yields a good lower bound at high SNR. Here
\begin{equation}
 \sigma^2_{e^*,T} = \max_{r,t,\ell}\mathsf{E} \left[ \left | E^{(T)}_{\ell} (r,t)  \right |^2  \right].
\end{equation}
Substituting this choice to the right-hand side (RHS) of \eqref{eq:GMI_gartner_ellis}, and  applying \eqref{eq:taufiq-technique} to upper-bound $\kappa(\theta,\SNR)$, we obtain
\begin{IEEEeqnarray}{lCl}
 \IEEEeqnarraymulticol{3}{l}{I^{\rm gmi} (\SNR)}\nonumber\\
 \quad  & \geq & \frac{1}{L}  \sum^{L-\nt}_{\ell = 1} \mathsf{E} \left[ \log {\rm det} \left(\IM_{\nr} + \frac{\SNR\,\hat \HRM^{(T)}_{\ell} \hat \HRM^{\dagger (T)}_{\ell} }{\nt\nr +  \nt \nr \SNR \sigma^2_{e^*,T}}  \right) \right]\nonumber\\
 & & \,\,\quad\qquad\qquad\qquad\qquad\qquad\qquad\qquad {} - \frac{L-\nt}{L}.\label{eq:GMILB_Tfixed}
\end{IEEEeqnarray}

We continue by analysing the RHS of \eqref{eq:GMILB_Tfixed} in the limit as the size of the observation window $T$ of the channel estimator tends to infinity. To this end, we note that, for $L\leq\frac{1}{2\lambda_D}$, the interpolation error tends to \eqref{eq:fadingestimate_error}, namely
\begin{equation}
 \sigma^2_{e^*} = \lim_{T \to \infty} \sigma^2_{e^*,T} = 1 - \int^{1/2}_{-1/2} \frac{\SNR \big(f_{H} (\lambda)\big)^2}{\SNR f_{H} \left( \lambda \right) + L} d \lambda.
\end{equation}
Similarly, since by the orthogonality principle $\hat  \HRM^{(T)}_\ell$ and $\ERM^{(T)}_{\ell}$ are independent, and since all entries in $\HRM_{\ell}$ have unit variance, it follows that
\begin{equation}
 \sigma^2_{\hat h} = \lim_{T \rightarrow \infty} \big(1- \sigma^2_{e^*,T}\big) = \int^{1/2}_{-1/2} \frac{\SNR \big(f_{H}(\lambda) \big)^2}{\SNR f_{H} (\lambda) + L} d \lambda. \label{eq:fadingestimate_asymptotic}
\end{equation}

We thus have by \eqref{eq:fadingestimate_asymptotic} that, irrespective of $\ell$, the estimate $\hat \HRM^{(T)}_\ell$ tends to $\bar \HRM$ in distribution
\begin{equation} 
 \frac{\hat \HRM^{(T)}_{\ell}\hat \HRM^{\dagger (T)}_{\ell}}{\nt\nr+\nt\nr\SNR\sigma^2_{e^*,T}}  \stackrel{d}{\longrightarrow}  \frac{\bar\HRM\bar\HRM^{\dagger}}{\nt\nr+\nt\nr\SNR\sigma^2_{e^*}}
\end{equation}
as $T$ tends to infinity, where the entries of $\bar \HRM$ are i.i.d., circularly-symmetric, complex Gaussian random variables with zero mean and variance $1 - \sigma^2_{e^*}$. Consequently, since the function $\mathsf{A}\mapsto\log\det(\mathsf{I} +\mathsf{A})$ is continuous and bounded from below, we obtain from Portmanteau's Lemma \cite{vdvaart_weakconvergence} that
\begin{IEEEeqnarray}{lCl}
\IEEEeqnarraymulticol{3}{l}{\lim_{T\to\infty}\mathsf{E} \left[ \log {\rm det} \left(\IM_{\nr} + \frac{\SNR\,\hat \HRM^{(T)}_{\ell} \hat \HRM^{\dagger (T)}_{\ell} }{\nt\nr +  \nt \nr \SNR \sigma^2_{e^*,T}}  \right) \right]}\nonumber\\
\qquad\quad & \geq & \mathsf{E} \left[ \log {\rm det} \left(\IM_{\nr} + \frac{\SNR\,\bar\HRM\hermi{\bar\HRM}}{\nt\nr +  \nt \nr \SNR \sigma^2_{e^*}}  \right) \right]\IEEEeqnarraynumspace
\end{IEEEeqnarray}
which yields the following lower bound on the GMI:
\begin{IEEEeqnarray}{lCl}
\IEEEeqnarraymulticol{3}{l}{\lim_{T\to\infty}I^{\rm gmi}(\SNR)}\nonumber\\
 & \geq & \frac{L - \nt}{L}\mathsf{E} \left[ \log\det\left(\IM_{\nr} + \frac{\SNR\,\bar \HRM \hermi{\bar \HRM}}{\nt\nr +  \nt \nr \SNR \sigma^2_{e^*}}  \right) \right] \nonumber\\
 &      & \,\,\quad\qquad\qquad\qquad\qquad\qquad\qquad\qquad~~~~ {} -\frac{L-\nt}{L} \\
 & \geq &  \frac{L - \nt}{L}\Bigg(\mathsf{E} \left[ \log\det\left(\frac{\SNR\,\bar \HRM \hermi{\bar \HRM}}{\nt\nr +  \nt \nr \SNR \sigma^2_{e^*}}  \right) \right] - 1\Bigg) \\
 & = & \frac{L - \nt}{L}\Bigg(\nt\log\SNR-\nt\log\big(\nt^2+\nt^2\SNR\,\sigma^2_{e^*}\big) \nonumber \\
 & & \,\,\quad\qquad\qquad\qquad\qquad~~~~ {} +\mathsf{E}\left[\log\det \bar\HRM\hermi{\bar\HRM}\right]-1\Bigg). \IEEEeqnarraynumspace\label{eq:GMILB_asymptotic_1}
\end{IEEEeqnarray}
Here the second step follows by lower-bounding $\log\det(\IM +\mathsf{A})\geq\log\det{\mathsf{A}}$; and the third step follows by evaluating the determinant and by using that, by our assumption, $\nt=\nr$.

To compute a lower bound on the pre-log
\begin{equation}
\Pi_{R^*} \triangleq \lim_{\SNR \rightarrow \infty} \frac{I^{\rm gmi} (\SNR) }{\log \SNR}
\end{equation}
we first note that, by \cite{eurasip:grant:rayleighfading_multi}, $\mathsf{E}[\log\det\bar\HRM\hermi{\bar\HRM}]$ is finite. We further note that 
\begin{equation}
\SNR\, \sigma^2_{e^*} = \int^{1/2}_{-1/2} \frac{\SNR f_H(\lambda) L}{\SNR f_H (\lambda) + L} d\lambda \leq L
\end{equation}
which implies that $\log\big(\nt^2+\nt^2\SNR\,\sigma^2_{e^*}\big)$ is finite, too. Thus, computing the ratio of the RHS of \eqref{eq:GMILB_asymptotic_1} to $\log\SNR$ in the limit as the $\SNR$ tends to infinity, we obtain the lower bound
\begin{IEEEeqnarray}{lCl}
\Pi_{R^*} &  \geq & \left( 1 - \frac{\nt}{L} \right) \nt \\
& =& \min(\nt,\nr)\left(1-\frac{\min(\nt,\nr)}{L}\right), \quad L\leq\frac{1}{2\lambda_D}\IEEEeqnarraynumspace
\end{IEEEeqnarray}
where we have used that $\nt=\nr=\min(\nt,\nr)$. The condition $L\leq1/(2\lambda_D)$ is necessary since otherwise \eqref{eq:fadingestimate_error} would not hold. This proves Theorem \ref{th:pre-log-MIMO}. 

%\bibliographystyle{IEEEtran}
%\bibliography{IEEEabrv,IEEE_myself}

% Generated by IEEEtran.bst, version: 1.13 (2008/09/30)

\end{document}